\newtheorem{theorem}{Theorem}
\newtheorem{definition}[theorem]{Definition}
\newtheorem{corollary}[theorem]{Corollary}
\newcommand{\indcpads}{\pcnotionstyle{IND\pcmathhyphen{}CPA\pcmathhyphen{}DS}}
\newcommand{\indcka}{\pcnotionstyle{IND\pcmathhyphen{}CKA}}
\newcommand{\indocpa}{\pcnotionstyle{IND\pcmathhyphen{}OCPA}}
\newcommand{\Leak}{\mathcal{L}}
\newcommand{\Mms}{\mathbb{M}}
\newcommand{\Cds}{\mathbb{C}}
\newcommand{\eseds}{\pcalgostyle{ESEDS}}
\newcommand{\seds}{\pcalgostyle{SEDS}}
\newcommand{\eds}{\pcalgostyle{EDS}}
\newcommand{\ds}{\pcalgostyle{DS}}
\newcommand{\pe}{\pcalgostyle{PE}}
\newcommand{\pse}{\pcalgostyle{PSE}}
\newcommand{\ppke}{\pcalgostyle{PPKE}}
\newcommand{\der}{\pcalgostyle{KDer}}
\newcommand{\search}{\pcalgostyle{Search}}
\renewcommand{\state}{\pckeystyle{st}}
\newcommand{\intersentencespace}{\spacefactor\sfcode`. \space}
\begin{document}

\date{}

\title{An Efficiently Searchable Encrypted Data Structure for Range Queries}

\author{
\IEEEauthorblockN{Florian Kerschbaum}
\IEEEauthorblockA{University of Waterloo\\
Waterloo, Ontario, Canada\\
Email: florian.kerschbaum@uwaterloo.ca}
\and
\IEEEauthorblockN{Anselme Tueno}
\IEEEauthorblockA{SAP \\
Karlsruhe, Germany \\
Email: anselme.kemgne.tueno@sap.com}
}

\maketitle


\begin{abstract}
At CCS 2015 Naveed et al.~presented first attacks on efficiently searchable encryption, such as deterministic and order-preserving encryption.
These plaintext guessing attacks have been further improved in subsequent work, e.g.~by Grubbs et al.~in 2016.
Such cryptanalysis is crucially important to sharpen our understanding of the implications of security models.
In this paper we present an efficiently searchable, encrypted data structure that is provably secure against these and even more powerful chosen plaintext attacks.
Our data structure supports logarithmic-time search with linear space complexity.
The indices of our data structure can be used to search by standard comparisons and hence allow easy retrofitting to existing database management systems.
We implemented our scheme and show that its search time overhead is only $10$ milliseconds compared to non-secure search.
\end{abstract}




\newlength{\myimagesize}
\setlength{\myimagesize}{\columnwidth}

\section{Introduction}

At CCS 2015 Naveed et al.~\cite{NavKam15} presented attacks on order-preserving encryption.
Later Grubbs et al.~\cite{GruSek16} improved the precision of these attacks.
Further attacks on searchable encryption have been presented \cite{CasGru15,DurDuB16,GruMcP16,IslKuz12,KelKol16,LacMin17,PouWri16,ZhaKat16}.
Such cryptanalysis is crucially important to sharpen our understanding of the implications of security models, since many of the attacked encryption schemes are proven secure in their specific security models.
In this paper we formalize security against these attacks and show a connection to chosen plaintext attacks.
We also demonstrate that there exists an encrypted data structure that supports efficient range queries by regular comparisons and that provably prevents these attacks.

\begin{table*}[!t]
\centering
\caption{Comparison of efficiency and security of schemes for range queries over encrypted data}
\label{tbl:comparison}
\begin{tabular}{l||c|c|c}
Scheme                                                       & Search Time $O(\log n)$ & Space $O(n)$ & $\indcpads$-secure\footnotemark \\
\hline
\hline
Partial order preserving encoding \cite{RocApo16}            & Yes                     & Yes          & Only before queries \\
\hline
Searchable encryption with replicated index \cite{DemPap16}  & Yes                     & No           & Yes \\
\hline
Searchable encryption with dynamic index \cite{HahKer16}     & Only amortized          & Yes          & Only before queries \\
\hline
Searchable encryption with index replacement \cite{BoePod16} & Yes                     & Yes          & No \\
\hline
Order-revealing encryption \cite{LewWu16}                    & No                      & Yes          & Yes \\
\hline
This paper                                                   & Yes                     & Yes          & Yes \\
\end{tabular}
\end{table*}

Comparison using regular comparison operators (e.g.~greater-than) as enabled by our scheme and order-preserving encryption has many practical benefits.
These encryption schemes can be retrofitted to any existing database management system making them extra-ordinarily fast, flexible and easy-to-deploy.
We preserve this property as our implementation demonstrates, but some minor modifications to the search procedure are necessary.

Efficiency -- logarithmic time and linear space complexity -- is also an important property of search over encrypted data.
In Table \ref{tbl:comparison} we provide a comparison of our scheme to the most secure and efficient order-preserving schemes \cite{RocApo16}, order-revealing encryption \cite{LewWu16} and range-searchable encryption \cite{BoePod16,DemPap16,HahKer16} schemes.
No searchable encryption scheme -- including ours -- offers perfect security and efficiency for all functions (equality and range search, insertions, deletions, etc.).
It is a research challenge to balance the trade-off between the two objectives, even for a restricted set of functions.
We aim at provable security against the recently publicized plaintext guessing attacks while still enabling efficient range search.
In this respect, we achieve a novel and preferable trade-off between security and efficiency.

In the construction of our scheme we borrow the ideas of previous order-preserving encryption schemes: modular order-preserving encryption by Boldyreva et al.~\cite{BolChe11}, ideal secure order-preserving encoding by Popa et al.~\cite{PopLi13} and frequency-hiding order-preserving encryption by Kerschbaum~\cite{Ker15}.
We assign a distinct ciphertext for each -- even repeated -- plaintext as Kerschbaum does, but his scheme statically leaks the partial insertion order.
So, we compress the randomized ciphertexts to the minimal ciphertext space using Popa et al.'s interactive protocol.
Then we rotate around a modulus as Boldyreva et al., but on the ciphertexts and not on the plaintexts.

As a result we achieve {\em structural independence} between the ciphertexts and plaintexts which is a prerequisite for security against chosen plaintext attacks and plaintext guessing attacks -- particularly, if the adversary has perfect background knowledge on the distribution of plaintexts.
We formalize this insight as a novel security model ($\indcpads$-security) for efficiently searchable, encrypted data structures and we prove our scheme secure in this model.
Our security model encompasses a number of recently publicized attacks where attackers broke into cloud system and stole the stored data.
Such an attack will reveal no additional information when data is encrypted with our scheme.
This will also thwart the attacks by Naveed et al.~\cite{NavKam15} and Grubbs et al.~\cite{GruSek16} mentioned at the beginning of the introduction. 

The implementation of our scheme shows only $10$ milliseconds overhead compared to non-secure search on a database with a million entries.
In summary our contribution are as follows:

\begin{itemize}

\item We formulate a security notion that provably prevents chosen plaintext attacks and plaintext-guessing attacks as those by Naveed et al.~and Grubbs et al.
Our model provides provable security against attackers with (one-time) snapshot access to the encrypted data as in the most common attacks on cloud computing.

\item We present an efficiently searchable, encrypted data structure that supports range queries and fulfills this security notion.
Our search scheme is retrofittable into existing database management systems and we provide a prototypical implementation.

\item We evaluate the performance of our scheme in a prototypical implementation.
Our scheme shows only roughly $10$ milliseconds overhead compared to non-secure search.

\end{itemize}

The remainder of the paper is structured as follows.
In the next section we define what we mean by an efficiently searchable, encrypted data structure.
In Section \ref{sec:security} we present and motivate our new security model preventing plaintext guessing attacks.
Then, we present our efficiently searchable, encrypted data structure secure in this model in Section~\ref{sec:scheme}.
We evaluate the performance of the implementation of our scheme in \ref{sec:evaluation}.
Finally, we review related work in Section~\ref{sec:related} and summarize our conclusions in Section~\ref{sec:conclusions}.

\section{Efficiently Searchable En\-cryp\-ted Data Structures}
\label{sec:edese}

First, we define what we mean by a efficiently searchable encrypted data structure ($\eseds$).
We start by defining what we mean by a data structure.
We use the fundamental representation of a data structure in random-access memory, i.e.~an array.
Each cell of the array consists of a structured element.
We do not impose any restriction on the structure of the element, but usually this element contains two parts: the data to be searched over and further structural information, such as indices of further entries.
Note that structural information may be {\em implicit}, i.e.~the index where an element is stored itself is structural information albeit not explicitly stored.
This implicit structural information may also not be encrypted, but only randomized.
An example of {\em explicit} structural information are the indices of the cells of the two children in a binary search tree which would be stored in a cell's structure in addition to the data of a tree node.
Explicit structural information can be encrypted.
We write $\Cds[j]$ for the $j$-th element and if it is clear from the context, we assume it consists only of a ciphertext of the data element (with $j$ being the implicit structural information).

\begin{definition}[\ds]
\label{defn:ds}

A data structure $\ds$ consists of an array of elements $\Cds[j]$ ($0 \leq j < n$).

\end{definition}

For an encrypted data structure there are a number of options on the type of encryption.
First, we can choose symmetric or public-key encryption.
We can instantiate our encrypted data structure with either one.
Let $\pse$ be a probabilistic symmetric encryption scheme consisting of three -- possibly probabilistic -- polynomial-time algorithms $\pse = \kgen(\secparam)$, $\enc(\key, m)$, $\dec(\key, c)$.
Let $\ppke$ be a probabilistic public-key encryption scheme consisting of three -- possibly probabilistic -- polynomial-time algorithms $\ppke = \kgen(\secparam)$, $\enc(\pk, m)$, $\dec(\sk, c)$.
Let $\pk \gets \der(\sk)$ be a deterministic algorithm that derives the public key from the private key in a public-key encryption scheme.
For symmetric key encryption let $\der$ be the identity function.
Let $\pe \in \{ \pse, \ppke \}$ and we use $\pe$ when we leave the choice of encryption scheme open.

\footnotetext{$\indcpads$ is defined in Section \ref{sec:indcpads}.}
Second, we can either encrypt the data structure as a whole or parts of the data structure -- ideally each cell.
Our requirement of efficient search rules out the first option.
Since in this case each search operation would require decrypting the data structure which is at least linear in the ciphertext size, sublinear search is impossible.
Hence, we require each cell to be encrypted as a separate ciphertext.\footnote{In case several cells of a simple data structure are encrypted as a whole, we call this combination a cell of another data structure.}

Third, for data security it may only be necessary to encrypt the data elements of a cell and not the structural information.
In fact, our own proposed $\eseds$ is an instance of such a case where the structural information is implicit from the array structure and unencrypted.
Hence, we only require the data element of each cell to be encrypted.

\begin{definition}[\eds]
\label{defn:eds}

An encrypted data structure $\eds_{\pe}$ consists of an array of elements $\Cds[j]$ where at least the data part has been encrypted with $\pe$.

\end{definition}

We can now define the operations on a searchable encrypted data structure $\seds_{\pe}$.
We write $\seds$ when the choice encryption scheme is clear from the context.
Furthermore we denote sometimes denote the version $h$ (after $h$ insertions) of a data structure as $\seds^h$.
Our definition is for range-searchable encrypted data structures, but this implies a definition for keyword searchable data structure as well (where the range parameters are equal: $a=b$).
Furthermore, we do not define how operations on our $\seds$ are to be implemented.
These operations can be implemented as algorithms running on a single machine or protocols distributed over a client and server (hiding the secret key from the server).
Both choices are covered by our definition.

\begin{definition}[\seds]
\label{defn:seds}

A searchable encrypted data structure $\seds_{\pe}$ offers the following operations.

\begin{itemize}

\item $\key \gets \kgen(\secparam)$: Generates a -- either secret or private -- key $\key$ from the encryption scheme $\pe$ according to the security parameter $\secpar$.

\item $\Cds^{h+1} \gets \enc(\key, \Cds^h, m)$:  Encrypts the plaintext $m$ using $\pe.\enc(\der(\key), m)$ and inserts it into the data structure $\Cds^h$ resulting in data structure $\Cds^{h+1}$.\footnote{Note that in case of public key encryption our definition does not imply that the entire operation can be completed using only the public key.}

\item $m := \dec(\key, \Cds[j]$): Computes the plaintext $m$ for the data part of encrypted cell $\Cds[j]$ using key $\key$.

\item $\{ j_0, \ldots, j_{\ell-1} \} := \search(\key, \Cds, a, b)$:  Computes the set of indices $\{ j_0, \ldots, j_{\ell-1} \}$ for the range $[a, b]$ on the encrypted data structure $\Cds$ using key $\key$.

\end{itemize}

\end{definition}

For the correctness of encryption we expect in a sequence of operations $\enc(\key, \Cds^0, m_0), \ldots, \enc(\key, \Cds^{n-1}, m_{n-1})$ resulting in data structure $\Cds^n$ that $\forall i \, \exists j \, m_i = \dec(\key, \Cds^n[j])$.
For the correctness of search we expect that for any $\{ j_0, \ldots, j_{\ell-1} \} := \search(\key, \Cds, a, b)$, it holds that $\forall j \, \in \{ j_0, \ldots, j_{\ell-1} \} \Longrightarrow \dec(\key, \Cds[j]) \in [ a, b ]$ and 
$\forall j \, \in \{ j | \dec(\key, \Cds[j]) \in [ a, b ] \} \Longrightarrow j \in \{ j_0, \ldots, j_{\ell-1} \}$.

We can now finally define an efficiently searchable encrypted data structure.

\begin{definition}[\eseds]
\label{defn:eseds}

An efficiently searchable encrypted data structure $\eseds$ is a searchable encrypted data structure where the running time $\tau$ of $\search$ is poly-logarithmic in $n$ (plus the size of the returned set of matching ciphertext indices) and the space $\sigma$ of $\eseds$ is linear in $n$:
\begin{align*}
\tau(\search)  & \leq \bigO{\operatorname{polylog}(n) + \ell} \\
\sigma(\eseds) & = \bigO{n}
\end{align*}

\end{definition}

It is now clear that efficient search prevents encrypting the entire data structure and thereby achieving semantic ($\indcpa$) security.
Next, we give our definition of security that implies that each cell's data is encrypted with a semantically secure encryption scheme.
Our security definition also prevents all plaintext guessing attacks of the type of Naveed et al.~and Grubbs et al.
Furthermore, we show that even when the data structure consists of only one semantically secure ciphertext in each cell, this does not guarantee security against these plaintext guessing attacks.

\section{Security of $\eseds$}
\label{sec:security}

Before we define the security of an $\eseds$ we will review recent attacks on cloud infrastructures and searchable encryption scheme to motivate our security model.
Particular we review in depth plaintext guessing attacks that only need a (multi-)set of ciphertexts as input (and do not perform active attacks during encryption or search operations).
We try to generalize these attacks and show that even if all elements in an $\eseds$ are semantically secure encrypted, this does not imply that these attacks are infeasible.

\subsection{Motivation}

Our model is motivated by recent attacks on cloud infrastructures and order-preserving or deterministic encryption.
Not only the theoretic demonstrations, but also real world incidents show the risks of deterministic -- not even order-preserving -- encryption.
In at least one case passwords were encrypted using a deterministic algorithm and many subsequently broken \cite{Duc13}.
The cryptanalysis was performed on stolen ciphertexts only (using additional plaintext hints).
Many other hacking incidents have been recently publicized, e.g.~\cite{Fit14,McC16}, that resulted in leakage of sensitive information -- not necessarily ciphertexts.

All these attacks share a common ``anatomy''.
The hackers are capable to break in, access and copy sensitive information.
They used the opportunity of access to gain as much data as possible in a short time, i.e.~the adversary obtains a {\em static} snapshot.
Note that this does not rule out the revelation of more sophisticated, longitudinal attacks in the future, but underpins the pressure to secure our current systems.

In this respect our model achieves the following:
An attacker gaining access to all ciphertexts stored in an encrypted database does not gain additional information to his background knowledge.
We assume even perfect background knowledge, i.e.~the adversary has chosen all plaintexts.
This may sound contradictory at first -- why would someone break into a database which data he knows.
However, if we are able to show security against such strong adversaries, security holds even if the adversary has less, e.g.~imperfect, background knowledge.

\subsection{Transformation to $\eseds$}

Most commonly plaintext guessing attacks are performed on multi-sets of deterministic or order-preserving ciphertexts which do not impose an order as our $\eseds$ do.
However, there exists a natural connection between these encryption schemes and $\eseds$.
We present transformations that turn deterministic or order-preserving encryption schemes into an $\eseds$ as in Definition \ref{defn:eseds} with equivalent leakage.
Any attack successful on these encryption schemes will be successful on the corresponding $\eseds$.

Our transformation for deterministic encryption is loosely based on the data structure in \cite{CurGar11}.
Let $\Mms$ be the multi-set of plaintexts and $\tilde{\Mms}$ be the set of distinct plaintexts.
We denote the size of a multi-set $\mathbb{M}$ as $|\mathbb{M}|$ and the number of occurrences of element $m$ in multi-set $\mathbb{M}$ as $\#_{\mathbb{M}}m$.
Let $\tilde{m}_i$ be $i$-th distinct plaintext and hence $\#_{\Mms}\tilde{m}_i$ be the number of elements $\tilde{m}_i$ in $\Mms$.
We also denote these elements as $m_{i, h}$ for $h = 0, \ldots, \#_{\Mms}\tilde{m}_i - 1$.
Let $p_{i,h}$ be the index of $m_{i, h}$ in $\eseds$ and $p_{i,h} = -1$, if $h \geq \#_{\Mms}\tilde{m}_i$.
Since deterministic encryption can be stored in a relational table, we use the row identifiers $id_{i,h}$ of each ciphertext as the document identifiers and the data $\tilde{m}_i$ as the keywords.
Let $\prf_\key$ be a keyed, pseudo-random function that maps the domain of keywords onto the size $n$ of the $\eseds$.\footnote{For ease of exposition we assume no collisions.}
Then 
$$
\Cds[\prf_\key(\tilde{m}_i)] \gets \Angle{\pe.\enc(\key, \tilde{m}_i), \pe.\enc(\key, id_{i,0}), p_{i, 1}}
$$
For each data $m_{i, h}$ where $0 < h < \#_{\Mms}\tilde{m}_i$ we store
$$
\Cds[p_{i,h}] \gets \Angle{\pe.\enc(\key, \tilde{m}_i), \pe.\enc(\key, id_{i,h}), p_{i, h+1}}
$$
One reveals $\prf_\key(\tilde{m}_i)$, accesses the corresponding bucket (cell) in the data structure and then traverses the list for efficient (keyword) search.

For deterministic order-preserving encryption we can use a similar transformation as above, but use the order $\operatorname{order}(\tilde{m}_i)$ of the plaintext as the element index.
Instead of hashing the keyword into a bucket, one can use binary search for efficient search on this $\eseds$.
$$
\Cds[\operatorname{order}(\tilde{m}_i)] \gets \Angle{\pe.\enc(\key, \tilde{m}_i), \pe.\enc(\key, id_{i,0}), p_{i, 1}}
$$
In frequency-hiding order-preserving encryption, we no longer have a list of identical ciphertexts, but each ciphertext is unique.
Then we can use the randomized order $\operatorname{rand-order}(m_{i,h})$ where elements are sorted, but ties are broken based on the outcome of a coin flip as defined in \cite{Ker15} as the element index.
However, we no longer need to store the row-identifier, since each ciphertext is unique and can be found in the relational table.
$$
\Cds[\operatorname{rand-order}(m_{i,h})] \gets \pe.\enc(\key, \tilde{m}_i)
$$

These $\eseds$ are susceptible to the same plaintext guessing attacks as those by Naveed et al.~\cite{NavKam15} and Grubbs et al.~\cite{GruSek16} on the respective encryption schemes.
We next review these attacks on these encryption schemes.

\subsection{Plaintext Guessing Attacks}
\label{sec:background}

Naveed et al.~\cite{NavKam15} present a series of attacks on deterministic and order-preserving encryption.
They attack deterministic order-preserving encryption by Boldyreva et al.~\cite{BolChe09}.
Grubbs et al.~improved the precision of the attacks and also extended them to other order-preserving and order-revealing encryption.
Their new attacks are not fundamentally different, but improve the matching algorithm between the assumed and measured frequency. 
However, Grubbs et al.~present the  first attack on frequency-hiding order-preserving encryption (FH-OPE) -- the ``bucketing'' attack.




Let $\Cds$ be the multi-set (a multi-set may potentially include duplicate values) of ciphertexts and $\Mms$ be the multi-set of plaintexts in the background knowledge of the adversary.
We assume that the sizes of the multi-sets are equal: $n = |\Cds| = |\Mms|$.

\subsubsection{Frequency Analysis}
\label{sec:fa}

The frequency analysis attack first computes the histograms $\operatorname{Hist}(\Cds)$ and $\operatorname{Hist}(\Mms)$ of the two multi-sets.
Then it sorts the two histograms in descending order: $\vec{c} := \operatorname{Sort}(\operatorname{Hist}(\Cds))$ and $\vec{m} := \operatorname{Sort}(\operatorname{Hist}(\Mms))$.
The cryptanalysis for $c_i$ is $m_i$, i.e.~the two vectors are aligned.

Naveed et al.~implement the frequency analysis as the $l_P$-opti\-mization attack.
Lacharite and Paterson show that frequency analysis is expected to be the optimal cryptanalysis \cite{LacPat15}, but also that $l_P$-opti\-mization is expected to be close to this optimimum.

In the $l_P$-opti\-mization attack the two histograms are not simply sorted and aligned, but a global minimization is run to find an alignment.
Let $\mathbb{X}$ be the set of $n \times n$ permutation matrices.
The attack then finds $X \in \mathbb{X}$, such that the distance $|| \vec c - X \vec m ||_P$ is minimized under the $l_P$ distance.
For many distances $l_P$ the computation of $X$ can be efficiently (polynomial in $n$) performed using an optimization algorithm, such as linear programming.
The cryptanalysis for $c_i$ is $X[m]_i$, i.e.~the two vectors are aligned after permutation.

The attack works not only for order-preserving encryption, but also for deterministic encryption.
The attack is very successful in experimentally recovering hospital data -- even for such deterministic encryption.
Naveed et al.~report an accuracy of $100\%$ for $100\%$ and $95\%$ of the hospitals for the binary attributes of ``mortality'' (whether a patient has died) and ``sex'', respectively, under deterministic encryption.


\subsubsection{Sorting Attack}

Let $\mathbb{D}$ be the domain of all plaintexts in multi-set $\Mms$.
Let $N = |\mathbb{D}|$ be the size of the domain $\mathbb{D}$.
The sorting attack assumes that $\Cds$ is dense, i.e.~contains a ciphertext $c$ for each $m \in \mathbb{D}$.
The adversary computes the unique elements $\operatorname{Unique}(\Cds)$ and sorts them $\vec c := \operatorname{Sort}(\operatorname{Unique}(\Cds))$ and the domain $\vec d := \operatorname{Sort}(\mathbb{D})$.
The cryptanalysis for $c_i$ is $d_i$, i.e.~the order of the ciphertext and the plaintext are matched.

The attack is $100\%$ accurate, if the ciphertext multi-set is dense.
This is a strong assumption, but already Naveed et al.~present a refinement that works also for low-density data.
This cumulative attack combines the $l_P$-optimi\-zation and sorting attack.
The adversary first computes the histograms $\vec c_1 := \operatorname{Hist}(\Cds)$, $\vec m_1 := \operatorname{Hist}(\Mms)$ and the cumulative density functions $\vec c_2 := \operatorname{CDF}(\Cds)$, $\vec m_2 := \operatorname{CDF}(\Mms)$ of the cipher- and plaintexts.
The attack then finds the permutation $X \in \mathbb{X}$, such that the sum of the distances between the histograms and cumulative density functions $|| \vec c_1 - X \vec m_1 ||_P + || \vec c_2 - X \vec m_2 ||_P$ is minimized.
Again, this can be done using efficient optimization algorithms.
The cryptanalysis for $c_i$ is $X[m]_i$.

The attack is very accurate against deterministic order-preserving encryption as demonstrated by Naveed et al.
They report an accuracy of $99\%$ for $83\%$ of the large hospitals for the attributes of ``age''.
The age column is certainly not low-density, but also not dense (as the success rate shows).

Grubbs et al.~further improve the algorithms in this attack by using bipartite matching.
They report an accuracy for their improved attacks of up to $99\%$ on first names and up to $97\%$ for last names which have much more entropy than age.

\subsubsection{Bucketing Attack on FH-OPE}
\label{sec:bucket-attack}

The extension of the sorting attack -- the bucketing attack -- on FH-OPE proceeds as follows.
The adversary sorts the multi-sets $\vec c := \operatorname{Sort}(\Cds)$ and $\vec m := \operatorname{Sort}(\Mms)$, i.e.~it is not necessary to only use unique values.
The cryptanalysis for $c_i$ is $m_i$.
Note that in FH-OPE every element $c_i \in \Cds$ is unique, but after the attack aligned to the cumulative density function of $\Mms$ as in the cumulative attack.

Grubbs et al.~recover 30\% of first names and 6\% of last names in their data set.
However, it can be even more accurate depending on the precision of the background know\-ledge $\Mms$.
It can be very dangerous to make assumptions about the adversary's background knowledge, since they are hard, if not impossible, to verify and uphold.
Hence, in our $\indcpads$-security model for $\eseds$ we assume {\em perfect} background knowledge of the adversary, i.e.~the multi-set $\Mms$ is the exact same multi-set as the plaintexts of the ciphertexts.
In fact, the multi-set $\Mms$ is chosen by the adversary in a chosen plaintext attack.
The bucketing attack then succeeds with $100\%$ accuracy.

The focus of this paper is preventing these plaintext guessing attacks on efficiently searchable encryption.
However, attacks using stronger adversaries, e.g.~active modifications or insertions, have been presented in the scientific literature \cite{CasGru15,DurDuB16,GruMcP16,KelKol16,LacMin17,PouWri16,ZhaKat16}.

\subsection{Security Definition}
\label{sec:indcpads}

We give our security definition as an adaptation of semantic security to data structures.
We show that our adaptation implies that each data value is semantically secure encrypted.
However, we also show that even if all cells consist of only one semantically secure ciphertext, our adaptation is not necessarily fulfilled.

First, recall the definition of semantic security.

\begin{definition}[IND-CPA]
\label{def:ind-cpa}
A public-key encryption scheme $\ppke$ has \emph{indistinguishable encryptions under a chosen-plaintext attack}, or is \emph{\indcpa-secure}, if for all PPT adversaries $\adv$ there is a negligible function $\negl$ such that

\begin{align*}
 \advantage{\indcpa}{\adv,\ppke} &:=
 \left|
  \prob{\mathrm{Exp}_{\adv,\ppke}^{\indcpa}(\secpar) = 1} - \dfrac{1}{2}
 \right| \\
 & \leq \negl
\end{align*}

\begin{pcvstack}[center]
\procedure
 {$\mathrm{Exp}_{\adv,\ppke}^{\indcpa}(\secpar)$}{
 \Angle{\pk, \sk} \gets \ppke.\kgen(\secparam) \\
 \Angle{m_0, m_1, \state} \gets \adv(\secparam, \pk) \\
  b \sample \bin \\
  c \gets \ppke.\enc(\pk, m_b) \\
  b' \gets \adv(\secparam, \pk, c, \state) \\
  \pcreturn b = b'
}
\end{pcvstack}
\end{definition}

We note that $\indcpa$-security only considers a single ciphertext whereas a data structure consists of multiple ciphertexts and hence some structural information.
Exactly this structural information can be used in plaintext guessing attacks and we need to adapt semantic security to all ciphertexts.
We call our adaptation {\em indistinguishability under chosen-plaintext attacks for data structures} or \indcpads-security for short.
Loosely speaking, our security model ensures that an adversary who has chosen all plaintexts encrypted in a data structure cannot guess the plaintext of {\em any} ciphertext better than a random guess.
Recall that we denote the size of multi-set $\mathbb{M}$ as $|\mathbb{M}|$ and the number of occurrences of element $m$ in multi-set $\mathbb{M}$ as $\#_{\mathbb{M}}m$.

\begin{definition}[IND-CPA-DS]
\label{def:ind-cpa-ds}

An efficiently searchable encrypted data structure $\eseds$ is {\em indistinguishable under a chosen-plaintext attack}, or is {\em \indcpads-secure}, if for all PPT adversaries $\adv$ there is a negligible function $\negl$ such that

\begin{align*}
 \advantage{\indcpads}{\adv,\eseds} &:=
 \left|
  \prob{\mathrm{Exp}_{\adv,\eseds}^{\indcpads}(\secpar) = \Angle{1, p}} - p
 \right| \\
 & \leq \negl
\end{align*}

\begin{pcvstack}[center]
\procedure
{$\mathrm{Exp}_{\adv,\eseds}^{\indcpads}(\secpar)$}{
  \Angle{\pk, \sk} \gets \eseds.\kgen(\secparam) \\
  \Angle{\Mms_0, \Mms_1, \state} \gets \adv(\secparam, \pk) \\
  \pcif |\Mms_0| \neq |\Mms_1| \pcthen \ \pcreturn \bot \\
  b \sample \bin \\
  \Cds := \emptystring \\
  \pcforeach m \in \Mms_b \pcdo \\
  \pcind \Cds \gets \eseds.\enc(\sk, m, \Cds) \\
  \pcendforeach \\
  \Angle{j', m'} \gets \adv(\secparam, \pk, \Cds, \state) \\
  \pcreturn \Angle{\eseds.\dec(\sk, \Cds[j']) = m', \dfrac{\#_{\Mms_0 \cup \Mms_1}m'}{|\Mms_0 \cup \Mms_1|}}
}
\end{pcvstack}
\end{definition}

There are two differences between $\indcpa$-security and $\indcpads$-security.
First, the adversary chooses two multi-sets of plaintexts as input to the challenge instead of two single plaintexts.
This enables the adversary to create different situations to distinguish.
Assume the adversary returns two disjoint multi-sets as $\Mms_0$ and $\Mms_1$, e.g. $\Mms_0 = \{0, 0\}$ and $\Mms_1 = \{1,1\}$.
Then it can attempt to distinguish which of the two plaintext multi-sets have been encrypted by guessing any plaintext in the data structure.
Assume the adversary returns the same multi-set as $\Mms_0$ and $\Mms_1$, but with distinct plaintexts in the (identical) multi-set, e.g. $\Mms_0 = \Mms_1 = \{0,1\}$. 
This is admissible in the definition of $\indcpads$-security, since the only requirement is that the two multi-sets are of the same size.
The adversary can then attempt to distinguish at which position in the data structure each plaintext has been encrypted.

In order to enable the adversary to win the game when the position in the data structure is not indistinguishable, we made a second change to $\indcpa$-security:
The adversary's guess is the plaintext of a single ciphertext at any position in the data structure.
Hence, the adversary does not necessarily have to distinguish between the two plaintext multi-sets, it is sufficient, if it guesses correctly within the choice of sets (which may be equal).
However, even if the position in the ciphertext is indistinguishable, in order to win the adversary only has to guess correctly with a probability non-negligibly better than the frequency of the plaintext in the {\em union of the multi-sets}.
Hence, if the two multi-sets are not equal and the adversary can guess the chosen multi-set, it can win the game.

We next explain the implications of $\indcpads$-security and first prove that $\indcpads$-security implies $\indcpa$-security.
Our proof assumes the use of public-key encryption, but the proof for symmetric encryption is analogous using an encryption oracle.
We prove this by turning an adversary $\bdv$ that has advantage $\epsilon$ in experiment $\mathrm{Exp}_{\bdv,\ppke}^{\indcpa}$ into an adversary $\adv$ that has advantage $\epsilon$ in experiment $\mathrm{Exp}_{\adv,\eseds_{\ppke}}^{\indcpads}$.

\begin{theorem}
\label{thm:ind-cpa}
If $\eseds_{\ppke}$ is $\indcpads$-secure, then each ciphertext of the data element in $\Cds[j]$ $(0 \leq j < n)$ must be from a $\indcpa$-secure encryption scheme.
\end{theorem}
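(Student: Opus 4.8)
The plan is to prove the contrapositive by reduction: assuming some cell's data ciphertext comes from an encryption scheme that is \emph{not} $\indcpa$-secure, I construct an $\indcpads$-adversary $\adv$ against $\eseds_{\ppke}$ with non-negligible advantage, using as a black box the $\indcpa$-adversary $\bdv$ that has advantage $\epsilon$ against the underlying $\ppke$. The overall structure mirrors the statement made just before the theorem: turn $\bdv$ with advantage $\epsilon$ in $\mathrm{Exp}_{\bdv,\ppke}^{\indcpa}$ into $\adv$ with advantage $\epsilon$ in $\mathrm{Exp}_{\adv,\eseds_{\ppke}}^{\indcpads}$.

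First I would have $\adv$, on input $\secparam$ and $\pk$, run $\bdv(\secparam,\pk)$ to obtain its challenge pair $\Angle{m_0, m_1, \state}$. Then $\adv$ submits the two \emph{singleton} multi-sets $\Mms_0 = \{m_0\}$ and $\Mms_1 = \{m_1\}$ as its own challenge; these have equal size $1$, so the experiment does not abort. The experiment picks $b \sample \bin$ and builds $\Cds$ by inserting the single plaintext $m_b$, so $\Cds$ consists of exactly one populated cell whose data part is $\ppke.\enc(\der(\sk), m_b)$ — precisely a fresh $\indcpa$ challenge ciphertext for $m_b$. Next $\adv$ recovers that ciphertext $c$ from the (unique) cell of $\Cds$ and feeds $\Angle{\secparam, \pk, c, \state}$ to $\bdv$, which returns a bit $b'$. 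Finally $\adv$ outputs the guess $\Angle{j', m'} = \Angle{j^\ast, m_{b'}}$, where $j^\ast$ is the index of the single populated cell.

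For the analysis: $\adv$'s output is correct on the first coordinate exactly when $\eseds.\dec(\sk, \Cds[j^\ast]) = m_{b'}$, i.e.\ when $m_b = m_{b'}$, which (since $m_0 \neq m_1$ may be assumed WLOG, as $\bdv$ cannot win with equal plaintexts) happens iff $b' = b$ — the event that $\bdv$ wins its $\indcpa$ game. On the second coordinate, the experiment returns the frequency $\#_{\Mms_0 \cup \Mms_1} m' / |\Mms_0 \cup \Mms_1|$ of the guessed plaintext in the union; since the union is the two-element multi-set $\{m_0, m_1\}$ and $m' \in \{m_0, m_1\}$, this frequency $p$ equals $\tfrac12$. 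Thus $\prob{\mathrm{Exp}_{\adv,\eseds}^{\indcpads}(\secpar) = \Angle{1, \tfrac12}} = \prob{b' = b} = \prob{\mathrm{Exp}_{\bdv,\ppke}^{\indcpa}(\secpar) = 1}$, and therefore $\advantage{\indcpads}{\adv,\eseds} = \bigl| \prob{b' = b} - \tfrac12 \bigr| = \advantage{\indcpa}{\bdv,\ppke} = \epsilon$. Since $\epsilon$ is non-negligible by assumption, $\eseds_{\ppke}$ is not $\indcpads$-secure, completing the contrapositive.

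The main obstacle, and the only place the argument needs care, is the bookkeeping around the second (frequency) coordinate of the $\indcpads$ experiment and the degenerate case $m_0 = m_1$: one must argue that restricting to distinct challenge plaintexts loses no generality (an $\indcpa$-adversary querying $m_0 = m_1$ has advantage $0$), and that with the singleton-multi-set encoding the union $\Mms_0 \cup \Mms_1$ always has size $2$ and every admissible guess $m'$ has frequency exactly $\tfrac12$, so that the "$=\Angle{1,p}$" event collapses cleanly to "$b'=b$". A secondary subtlety worth a sentence is the remark the paper already flags — that for symmetric encryption the reduction is identical except $\adv$ supplies $\bdv$ with access to an encryption oracle, which $\adv$ can implement only because $\indcpads$ gives it no secret key; so one should note the public-key phrasing is WLOG.
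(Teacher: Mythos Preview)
Your proposal is correct and follows essentially the same reduction as the paper: run $\bdv$ to obtain $m_0,m_1$, submit the singleton multi-sets $\Mms_0=\{m_0\}$ and $\Mms_1=\{m_1\}$, hand the single resulting ciphertext $\Cds[0]$ back to $\bdv$, and output $\Angle{0,m_{b'}}$. Your write-up is in fact more careful than the paper's about the frequency coordinate $p=\tfrac12$, the degenerate case $m_0=m_1$, and the symmetric-key variant, but the underlying argument is identical.
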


\begin{proof}
Let $\bdv$ be an adversary that has advantage $\epsilon$ in experiment $\mathrm{Exp}_{\bdv,\ppke}^{\indcpa}$.
We construct an adversary $\adv$ for experiment $\mathrm{Exp}_{\adv,\eseds_{\ppke}}^{\indcpads}$ as follows.

\begin{pchstack}[center]

\procedure
 {$\adv(\secparam, \pk)$}{
 \Angle{m_0, m_1, \state'} \gets \bdv(\secparam, \pk) \\
 \Mms_0 := \{ m_0 \} \\
 \Mms_1 := \{ m_1 \} \\
 \state := \state' \concat \{ \Mms_0, \Mms_1 \} \\
 \pcreturn \Angle{\Mms_0, \Mms_1, \state}
}

\pchspace

\procedure
 {$\adv(\secparam, \pk, \Cds, \state)$}{
 \state' \concat \{ \Mms_0, \Mms_1 \} := \state \\
 b' \gets \bdv(\secparam, \pk, \Cds[0], \state') \\
 \pcreturn \Angle{0, \Mms_{b'}[0]}
}

\end{pchstack}

The adversary $\bdv$'s view is indistinguishable from experiment $\mathrm{Exp}_{\bdv,\ppke}^{\indcpa}$
If adversary $\bdv$ guesses correctly, then $\adv$'s output is also correct.
Hence, if $\bdv$'s advantage is $\epsilon$, then $\adv$'s advantage is $\epsilon$.
\end{proof}

However, we also prove that even if each cell in $\Cds$ consists of a single ciphertext from a $\indcpa$-secure, public-key encryption scheme $\ppke$, then this does not imply $\indcpads$-security.
We prove by giving a data structure that consists of a single ciphertext from $\ppke$, but that is not $\indcpads$-secure.
Again, the proof for symmetric encryption is analogous.

\begin{theorem}
\label{thm:no-ind-cpads}
If each ciphertext in an efficiently searchable, encrypted data structure $\eseds_{\ppke}$ $\Cds[j]$ $(0 \leq j < n)$ is from a $\indcpa$-secure encryption scheme $\ppke$, then $\eseds_{\ppke}$ is not necessarily $\indcpads$-secure.
\end{theorem}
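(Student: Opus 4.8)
The plan is to prove this by exhibiting an explicit counterexample: an $\eseds_{\ppke}$ whose every cell holds a single $\indcpa$-secure ciphertext, yet which admits an $\indcpads$ adversary with non-negligible advantage. The natural construction is a sorted array, mimicking order-preserving encryption: on input $m$ the $\enc$ algorithm decrypts all existing cells, finds the rank of $m$ among the current plaintexts, shifts the larger ciphertexts one position to the right, and writes $\ppke.\enc(\pk,m)$ into the freed slot. After all of $\Mms_b$ has been inserted, the cell contents are fresh $\ppke$ ciphertexts — so each individual cell is $\indcpa$-secure — but the \emph{positions} reveal the sorted order of the underlying plaintexts. Search is then binary search by decryption, which is $O(\log n)$ and uses linear space, so the structure genuinely qualifies as an $\eseds$ under Definition~\ref{defn:eseds}.

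Next I would describe the distinguishing adversary $\adv$. It returns two multi-sets of equal size whose sorted orders differ in an exploitable way — the cleanest choice is $\Mms_0 = \{0,0\}$ and $\Mms_1 = \{1,1\}$, or if one wants to stress the ``within a set'' aspect, $\Mms_0 = \Mms_1 = \{0,1\}$ and then the position of a given plaintext is still fixed by the sort, but that second variant does not actually yield advantage since the decryption at position $j'$ is deterministic given the adversary's own guess. So I would go with disjoint sets: with $\Mms_0=\{0,0\}$, $\Mms_1=\{1,1\}$, after the game $\Cds[0]$ decrypts to either $0$ or $1$ depending on $b$. But here every plaintext in $\Mms_0\cup\Mms_1$ has frequency exactly $1/2$, so merely guessing a constant does not beat $p=1/2$. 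The fix is to make the sorted \emph{structure} leak $b$ while keeping the union-frequency low: e.g. $\Mms_0 = \{0, 2\}$ and $\Mms_1 = \{1, 2\}$. Then $\Cds[1]$ always decrypts to $2$ (the larger element lands in the high slot in both worlds, with frequency $2/4 = 1/2$ — still $1/2$). One more adjustment: take $\Mms_0=\{0\}\cup S$, $\Mms_1=\{1\}\cup S$ where $S$ is a large set of distinct values all exceeding $1$; then $\Cds[0]$ decrypts to $0$ or $1$, each of frequency $1/|(\Mms_0\cup\Mms_1)|$ in the union, which is negligible, yet the adversary reads $\Cds[0]$ — wait, it cannot decrypt. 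The real point: $\adv$ outputs $\Angle{0, m'}$ for the \emph{right} $m'\in\{0,1\}$, but it must decide which; the structural leakage that lets it decide is that in world $0$ the element at position $0$ is $0$ and in world $1$ it is $1$ — but $\adv$ has no key. So the distinguishing must come from comparing two positions whose relative arrangement $\adv$ can observe only through ciphertext equality or length, which $\indcpa$ hides.

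This is the crux and where the construction must be chosen carefully. The leakage that actually helps a keyless adversary in the snapshot model is not the \emph{value} at a position but an \emph{observable structural artifact}. So the counterexample I would actually present appends, alongside the $\indcpa$-secure data ciphertext, \emph{implicit} positional information that is not itself an encryption at all: store element $m$ at array index $\operatorname{order}(m)$ exactly as in the order-preserving transformation of Section~2, padding the domain so the density is high enough for the sorting attack. Concretely, let $\Mms_0$ and $\Mms_1$ be two size-$n$ multi-sets that are dense in a known domain $\mathbb D$ and differ in at least one element; the sorting attack from Section~\ref{sec:background} then recovers every plaintext with certainty from the positions alone, so $\adv$ outputs $j' = \operatorname{order}(\tilde m_i)$ and $m' = \tilde m_i$ for, say, the unique value where $\Mms_0,\Mms_1$ disagree — there the recovered value equals the true plaintext with probability $1$, while its frequency $p$ in $\Mms_0\cup\Mms_1$ is bounded away from $1$, giving advantage at least a constant. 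I would close by remarking that each cell is literally a single $\ppke$ ciphertext, so Theorem~\ref{thm:ind-cpa}'s necessary condition is met, yet $\indcpads$-security fails — which is exactly the gap the paper's own scheme is designed to close. The main obstacle in writing this up cleanly is making precise that ``positions leak the sorted order'' is compatible with ``each cell is $\indcpa$-secure'' — i.e. that the leakage lives entirely in the array indices, which Definition~\ref{defn:eds} explicitly permits to be unencrypted — and verifying the $O(\log n)$ / $O(n)$ bounds so the object really is an $\eseds$.
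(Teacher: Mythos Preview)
Your counterexample construction---a sorted array of $\indcpa$-secure ciphertexts---is exactly the paper's, but you talk yourself out of the correct adversary and end up in a muddle.

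The attack you dismiss, $\Mms_0=\Mms_1=\{0,1\}$ with output $\Angle{0,0}$, is precisely the paper's proof and it \emph{does} yield advantage. In the sorted structure, position $0$ deterministically holds the smaller plaintext, namely $0$; hence $\prob{\dec(\Cds[0])=0}=1$, while $p=\#_{\Mms_0\cup\Mms_1}0 / |\Mms_0\cup\Mms_1| = 2/4 = 1/2$, so the advantage is $|1-1/2|=1/2$. Your sentence ``that second variant does not actually yield advantage since the decryption at position $j'$ is deterministic given the adversary's own guess'' has it backwards: determinism is exactly what \emph{gives} the advantage, because the adversary's task is only to name one index--plaintext pair correctly with probability exceeding $p$. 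Your third variant, $\Mms_0=\{0,2\}$, $\Mms_1=\{1,2\}$ with output $\Angle{1,2}$, works for the same reason and with the same advantage $1/2$; ``still $1/2$'' is not a defect---any constant is non-negligible.

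The deeper misunderstanding surfaces in the paragraph beginning ``This is the crux'': you assume the adversary must \emph{extract} information from the ciphertext array (``$\adv$ has no key'', ``structural artifact $\adv$ can observe''). But in the $\indcpads$ game the adversary need not read anything from $\Cds$ at all. It chose the plaintexts and it knows the insertion algorithm; if that algorithm places plaintexts at predictable positions, the adversary simply \emph{predicts}. The snapshot $\Cds$ is irrelevant to this particular attack. Consequently your final ``sorting attack'' detour is both unnecessary and broken: if $\Mms_0$ and $\Mms_1$ differ at the position you target, the coin $b$ randomizes which value sits there, and a keyless adversary looking at $\indcpa$ ciphertexts cannot resolve that coin.

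In short: keep your sorted-array construction, restore the $\Mms_0=\Mms_1=\{0,1\}$ adversary outputting $\Angle{0,0}$, compute the advantage $1/2$, and you are done---matching the paper's proof essentially verbatim.
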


\begin{proof}
Given a multi-set of plaintexts $\Mms$ and a $\indcpa$-secure, public-key encryption scheme $\ppke$, we construct a data structure as follows.
Let $\operatorname{rand-order(m_i)}$ be the randomized order of each plaintext $m_i \in \Mms$.
Recall that in a randomized order of a multi-set, elements are sorted, but ties are broken based on the outcome of a coin flip.
$$
\Cds[\operatorname{rand-order}(m_i)] \gets \ppke.\enc(\pk, m_i)
$$

This data structure has equivalent leakage to frequency-hiding order-preserving encryption (FH-OPE) by Kerschbaum \cite{Ker15}.
It is easy to see that each cell of the data structure consists of only one semantically secure ciphertext.
However, we construct an adversary that succeeds with probability $1$ for $p = \dfrac{1}{2}$ in our experiment $\mathrm{Exp}_{\adv,\eseds_{\ppke}}^{\indcpads}$.
\begin{pchstack}[center]

\procedure
 {$\adv(\secparam, \pk)$}{
 \Mms := \{ 0, 1 \} \\
 \pcreturn \Angle{\Mms, \emptystring}
}

\pchspace

\procedure
 {$\adv(\secparam, \pk, \Cds, \state)$}{
 \pcreturn \Angle{0, 0}
}

\end{pchstack}

The adversary always wins the game, since in the given encryption scheme plaintext $0$ will always be encrypted at position $0$.
Grubbs et al.~showed in \cite{GruSek16} the practicality of the attack by constructing a plaintext guessing attack -- the bucketing attack described in Section~\ref{sec:bucket-attack} -- on FH-OPE.
In their experiments it succeeds with probability $30\%$ where the base line guessing probability is only $4\%$.
\end{proof}

\subsubsection{Relation to Other Security Definitions}

In searchable encryption a security definition of indistinguishability under chosen-keyword attack ($\indcka$-security) has been defined in \cite{CurGar11} and used in many subsequent works.
Loosely speaking, this security definition states that the data structure is $\indcka$-secure, if it is indistinguishable from a simulator given (a set of) leakage function(s) $\Leak$.
However, this can be misleading, since the leakage function does not necessarily clearly state the impact on plaintext guessing attacks.
We first state the following corollary:

\begin{corollary}
\label{cor:ind-cpa}
If a public-key encryption scheme $\ppke$ is \emph{\indcpa-secure}, then there exists a simulator $\simulator_{\ppke}(\secparam,\pk)$, such that for all PPT adversaries $\adv$ and all PPT distinguishers $\distinguisher$

\begin{multline*}
 \advantage{\indcpa}{\adv, \distinguisher, \ppke} := \\ 
   \left| \prob{\distinguisher(c, \pk) = 1 : \Angle{c, \pk} \gets \mathrm{RealExp}_{\adv,\ppke}^{\indcpa}(\secpar)} - \right. \\
   \left. \prob{\distinguisher(c, \pk) = 1 : \Angle{c, \pk} \gets \mathrm{SimExp}_{\simulator_{\ppke},\ppke}^{\indcpa}(\secpar)} \right| \ \ \\
  \leq \negl
\end{multline*}

\begin{pchstack}[center]
\procedure
 {$\mathrm{RealExp}_{\adv,\ppke}^{\indcpa}(\secpar)$}{
 \Angle{\pk, \sk} \gets \ppke.\kgen(\secparam) \\
 \Angle{m_0, m_1, \state} \gets \adv(\secparam, \pk) \\
 b \sample \bin \\
 c \gets \ppke.\enc(\pk, m_b) \\
 \pcreturn c, \pk
}

\pchspace

\procedure
 {$\mathrm{SimExp}_{\simulator_{\ppke},\ppke}^{\indcpa}(\secpar)$}{
 \Angle{\pk, \sk} \gets \ppke.\kgen(\secparam) \\
 c \gets \simulator_{\ppke}(\secparam,\pk) \\
 \pcreturn c, \pk
}

\end{pchstack}
\end{corollary}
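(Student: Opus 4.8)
The plan is to use the obvious simulator: let $\simulator_{\ppke}(\secparam,\pk)$ output a fresh ciphertext $\ppke.\enc(\pk,0)$, where $0$ is any fixed, canonical message of the message space (its only role is to be chosen independently of $\adv$; the $\indcpa$ definition in this paper carries no length bookkeeping, so no matching of message lengths is needed). Then $\mathrm{SimExp}$ returns $\Angle{\ppke.\enc(\pk,0),\pk}$ while $\mathrm{RealExp}$ returns $\Angle{\ppke.\enc(\pk,m_b),\pk}$ for a uniform $b\sample\bin$ with $\Angle{m_0,m_1,\state}\gets\adv(\secparam,\pk)$, so the corollary is exactly the claim that, for every PPT $\adv$, these two output distributions are computationally indistinguishable.

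To prove that, I would fix a PPT $\adv$ and a PPT distinguisher $\distinguisher$, write $p_{\mathrm{real}}$ and $p_{\mathrm{sim}}$ for the probabilities that $\distinguisher$ outputs $1$ in $\mathrm{RealExp}$ and $\mathrm{SimExp}$ respectively, and set $\delta := p_{\mathrm{real}}-p_{\mathrm{sim}}$. I then build an $\indcpa$-attacker $\bdv$ against $\ppke$ in the same wrapping style as the proof of Theorem~\ref{thm:ind-cpa}: on input $(\secparam,\pk)$, $\bdv$ runs $\adv$ to obtain $\Angle{m_0,m_1,\state}$, samples a bit $b\sample\bin$ internally, submits the challenge pair $(m_b,0)$, receives the challenge ciphertext $c$, runs $d\gets\distinguisher(c,\pk)$, and outputs $1-d$. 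When the $\indcpa$ challenge bit is $0$, $\bdv$ feeds $\distinguisher$ exactly the distribution produced by $\mathrm{RealExp}$; when it is $1$, exactly the distribution produced by $\mathrm{SimExp}$. Hence $\bdv$'s success probability is $\tfrac12 p_{\mathrm{real}} + \tfrac12(1-p_{\mathrm{sim}}) = \tfrac12 + \tfrac{\delta}{2}$, so $\advantage{\indcpa}{\bdv,\ppke} = |\delta|/2$, and therefore $\advantage{\indcpa}{\adv,\distinguisher,\ppke} = |\delta| = 2\,\advantage{\indcpa}{\bdv,\ppke} \leq \negl$ by $\indcpa$-security of $\ppke$, which is the claimed bound.

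I expect the only delicate bookkeeping to be the handling of the internal bit $b$: the simulator is a single fixed algorithm that knows nothing about $\adv$, whereas $\mathrm{RealExp}$ implicitly averages over $b$, so the reduction must sample $b$ itself and absorb that averaging into its single $\indcpa$ query. The remaining points are cosmetic: the absolute value in $\advantage{\indcpa}{\adv,\distinguisher,\ppke}$ is removed by the symmetric case $\delta<0$ (simply have $\bdv$ output $d$ instead of $1-d$), and the choice of canonical plaintext for the simulator is arbitrary.
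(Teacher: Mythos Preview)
Your proof is correct. The paper, however, states this corollary without proof; it is treated as an immediate consequence of $\indcpa$-security and not argued explicitly. Your choice of simulator $\simulator_{\ppke}(\secparam,\pk) = \ppke.\enc(\pk,0)$ is precisely the one the paper later uses (without justification) in the proof of Theorem~\ref{thm:slf-ope}, so your argument in effect fills the gap the paper leaves open. The reduction you give---having $\bdv$ internally sample $b$, submit $(m_b,0)$ as its $\indcpa$ challenge pair, and forward the resulting ciphertext to $\distinguisher$---is the standard hybrid trick and the bookkeeping yielding $\advantage{\indcpa}{\adv,\distinguisher,\ppke} = 2\,\advantage{\indcpa}{\bdv,\ppke}$ is clean.
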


It follows that there exists a simulator for an encrypted data structure whose cells consists only of semantically secure ciphertexts which requires a leakage function of only the length $n$ of the data structure and the public key $\pk$.
However, as we have shown in Theorem~\ref{thm:no-ind-cpads} such a data structure may not be $\indcpads$-secure and susceptible to plaintext guessing attacks.

\begin{theorem}
\label{thm:sim}
An efficiently searchable, encrypted data structure $\eseds_{\ppke}$ may be indistinguishably simulated with a leakage function $\Leak = \{ \pk, n \}$ and be susceptible to plaintext guessing attacks.
\end{theorem}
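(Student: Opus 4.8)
The plan is to prove this by exhibiting a single $\eseds_{\ppke}$ that simultaneously has the innocuous leakage profile $\Leak = \{\pk, n\}$ in the $\indcka$ sense and is nonetheless broken by a plaintext guessing attack. The natural candidate is the very construction used in the proof of Theorem~\ref{thm:no-ind-cpads}: build $\Cds$ from an $\indcpa$-secure public-key scheme $\ppke$ by placing $\ppke.\enc(\pk, m_i)$ at cell $\operatorname{rand-order}(m_i)$ for each plaintext $m_i$ in the chosen multi-set. The ``susceptible to plaintext guessing attacks'' half of the statement then requires no new work: Theorem~\ref{thm:no-ind-cpads} already gives an explicit adversary winning the $\indcpads$ game for this structure with probability $1$, and the bucketing attack recalled there recovers a $30\%$ fraction of plaintexts against a baseline of $4\%$. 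So the only thing left to establish is that this structure can be indistinguishably simulated from $\Leak = \{\pk, n\}$.

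First I would make the $\indcka$-style simulation game precise for an $\eseds$ in the obvious way: the adversary picks a multi-set $\Mms$ of size $n$; the real experiment returns $\pk$ together with the array $\Cds$ produced by the randomized-order construction above; the ideal experiment hands a simulator $\simulator$ only the leakage $\Leak(\Mms) = \{\pk, |\Mms|\}$ --- note this depends on $\Mms$ only through its cardinality --- and $\simulator$ must output an array of $n$ cells; a distinguisher $\distinguisher$ then tries to tell the two apart, and we want its advantage to be negligible in \secpar. The simulator is simply: read $n$ and $\pk$ from the leakage, invoke the single-ciphertext simulator $\simulator_{\ppke}(\secparam, \pk)$ of Corollary~\ref{cor:ind-cpa} exactly $n$ times, and place the $n$ outputs in an array of length $n$. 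This is well-formed because in the real structure every index $0, \ldots, n-1$ is occupied by one independently generated $\ppke$ ciphertext and the array positions carry no further data.

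The heart of the argument is a routine hybrid over the $n$ cells. Let hybrid $H_k$ be the array whose first $k$ cells hold simulated ciphertexts and whose remaining $n - k$ cells hold the real ciphertexts that $\operatorname{rand-order}$ dictates on $\Mms$, so $H_0$ is the real experiment and $H_n$ is the ideal one. To bound $\left| \prob{\distinguisher(H_{k-1}) = 1} - \prob{\distinguisher(H_k) = 1} \right|$ I would reduce to the real-versus-ideal game of Corollary~\ref{cor:ind-cpa}: knowing $\Mms$ and $\pk$, the reduction computes all real ciphertexts itself and samples simulated ones via $\simulator_{\ppke}$, it drops its own challenge object (either a genuine encryption of the plaintext assigned by $\operatorname{rand-order}$ to cell $k-1$, or a fresh simulated ciphertext) into cell $k-1$, fills the lower cells with simulations and the higher cells with real encryptions, and runs $\distinguisher$ on the result. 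Any per-step advantage is therefore at most the negligible quantity of Corollary~\ref{cor:ind-cpa}, and since $n$ is polynomial in \secpar, summing over the hybrids keeps the total advantage negligible. Combining this indistinguishable simulation with the attack inherited from Theorem~\ref{thm:no-ind-cpads} proves the theorem.

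The main obstacle is conceptual rather than technical: it is making the simulation game tight enough that ``$\Leak = \{\pk, n\}$'' is a faithful statement --- one must check that the simulator truly receives nothing about the chosen plaintexts beyond $n$, and that the randomized order, even though it permutes plaintexts across cells, contributes no leakage once each cell is an $\indcpa$ ciphertext placed in a dense array. The point the theorem drives home is precisely this gap: an honest $\indcka$ analysis can advertise the harmless-looking leakage $\{\pk, n\}$ while the scheme is in fact wide open to the bucketing attack, so low leakage in the simulation sense does not by itself rule out plaintext guessing attacks.
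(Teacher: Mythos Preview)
Your proposal is correct and follows the paper's own approach exactly: the paper also points to the randomized-order data structure from Theorem~\ref{thm:no-ind-cpads}, invokes (implicitly, via Corollary~\ref{cor:ind-cpa}) that it is indistinguishable from $n$ ciphertexts produced under $\pk$, and cites Grubbs et al.'s bucketing attack for the susceptibility claim. The only difference is that you spell out the hybrid argument over the $n$ cells where the paper leaves it as a one-line assertion.
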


\begin{proof}
Consider the data structure from the proof of Theorem~\ref{thm:no-ind-cpads}.
It is indistinguishable from $n$ ciphertexts produced using public key $\pk$ and successful plaintext guessing attacks have been shown by Grubbs et al.~in \cite{GruSek16}.
\end{proof}

Hence, leaking the number of plaintexts may be sufficient for a successful plaintext guessing attack in a simulation-based security proof.
Our $\indcpads$-security model prevents this by introducing a {\em structural independence} constraint.
While Curtmola et al.~have been careful not to make this mistake in \cite{CurGar11} and their $\eseds$ is $\indcpads$-secure, subsequent work was not as careful.
Boelter et al.'s data structure \cite{BoePod16} has a (correct) simulation-based proof and is not $\indcpads$-secure and susceptible to plaintext guessing attacks.\footnote{This is easy to see, since they do not encrypt the structural information in their data structure, i.e.~the pointers to leaf nodes in the tree, and hence the ciphertexts can be ordered.}

\subsubsection{Impact on Plaintext Guessing Attacks}

We can now revisit the plaintext guessing attacks on deterministic and order-preserving encryption.
First, our security model fully captures the attack setup.
The adversary is given full ciphertext information and can chose the plaintexts such that it has perfect background knowledge\footnote{Recall that the adversary is allowed to submit the same plaintext multi-sets in the $\indcpads$-security experiment}, i.e.~the adversary in our model has at least the same information as was used in those attacks.
Second, our security definition implies that if the adversary is then able to infer even one plaintext better than with negligible probability over guessing our scheme is broken.
Hence security in the $\indcpads$ model implies security against all (passive, ciphertext-only) plaintext guessing attacks.

\section{An $\indcpads$-Secure $\eseds$ for Range Queries}
\label{sec:scheme}

We next present our efficiently searchable, encrypted data structure for range queries that is $\indcpads$-secure.
We emphasis that using the result from the data structure we can perform range queries in any commodity database management system without modifications.
Hence, our data structure is as easy to integrate as order-preserving encryption, yet it is secure against chosen-plaintext attacks.
We begin by describing the system architecture and give the intuition of our construction.
We then present our encryption algorithm and interactive search protocol.

\subsection{System Architecture}

\begin{figure}[!ht]
\centering
\includegraphics[width=\myimagesize]{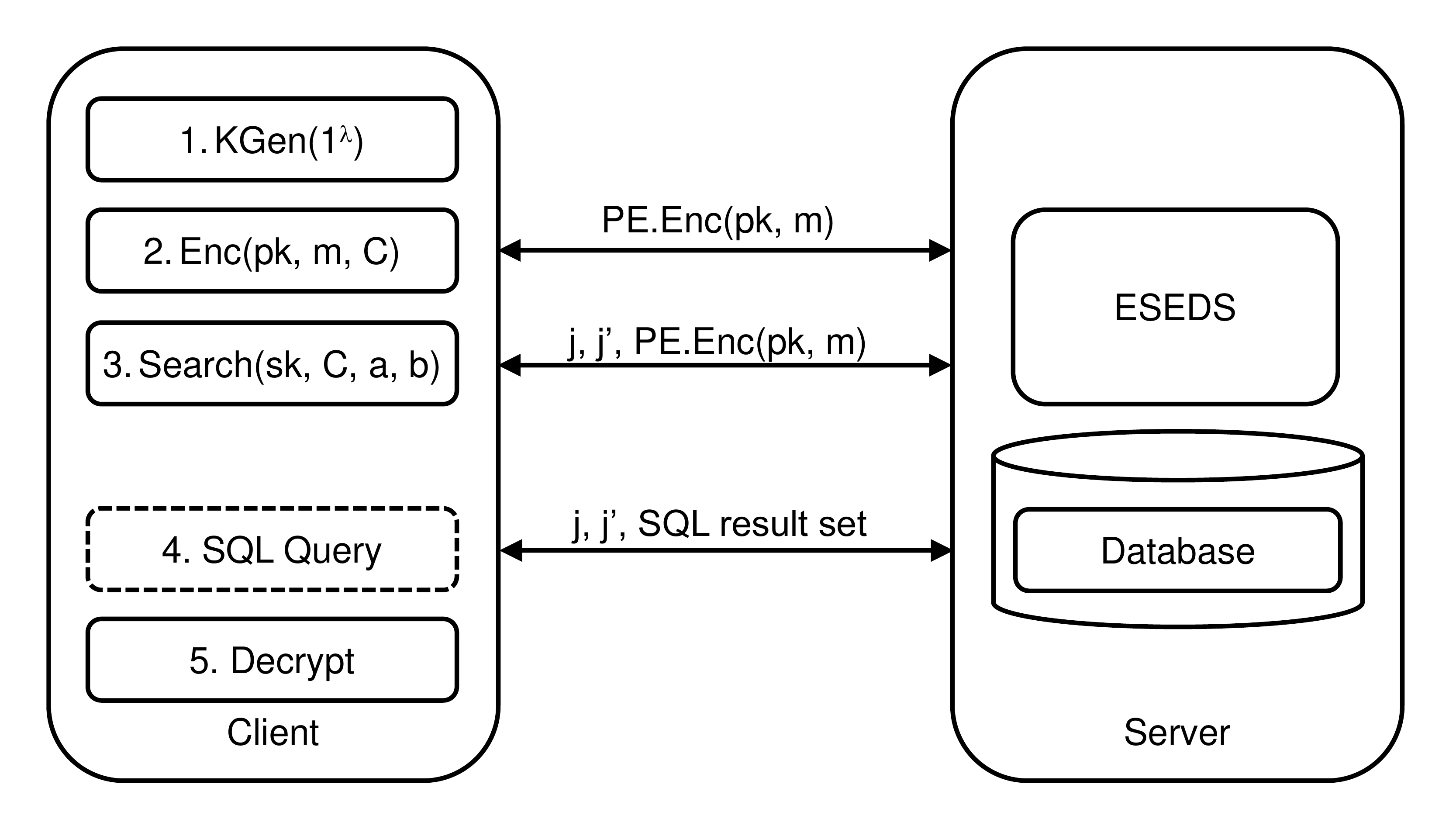}
\caption{Overview of system architecture}
\label{fig:overview}
\end{figure}

We depict an overview of our architecture in Figure~\ref{fig:overview}.
In our setup we assume a client holding the secret key $\key \gets \eseds$.\allowbreak{}$\kgen(\secparam)$ and a server that holds the data structure $\Cds$.
The server may hold several data structures managed independently for each database column, but needs to take care of correlation attacks as in \cite{DurDuB16}.
A database table then contains the rows linking the entries by their index in the data structure.
After encrypting the plaintexts, the client and the server can interactively perform a search query, e.g.~a range query, on the server's data structure which results in two indices $j, j'$.
%
Then these two indices $j, j'$ can be used in subsequent range queries on the database management system.
We assume that the server is {\em semi-honest}, i.e.~only performs {\em passive} attacks.
This model is commonly assumed in the scientific literature on database security.

\subsection{Intuition}
\label{sec:intuition}

Our data structure combines the ideas of three previous order-preserving encryption schemes:
First, the scheme by Popa et al.~\cite{PopLi13} provides the basis for managing the order of ciphertexts in a stateful, interactive manner.
Of course, this scheme is not secure against the attacks by Naveed et al., since it is deterministic and ordered.
Second, we add the frequency-hiding aspect of the scheme by Kerschbaum~\cite{Ker15}.
The scheme itself cannot be used as the basis of an $\indcpads$-secure data structure, since it partially leaks the insertion order.
Therefore the frequency-hiding idea needs to be fit into Popa et al.'s scheme.
We do this by encrypting the plaintext using a probabilistic algorithm (similar to the stOPE scheme in~\cite{PopLi13}) and also inserting a ciphertext for each plaintext using Kerschbaum's random tree traversal.
This combined construction would still not be $\indcpads$ secure.
Third, we apply Boldyreva et al.'s modular order-preserving encryption idea~\cite{BolChe11}.
This idea rotates the plaintexts around a modulus statically hiding the order.
However, modular order-preserving encryption has been developed for {\em deterministic} order-preserving encryption.
In our probabilistic encryption -- as introduced by Kerschbaum -- we need to apply the modulus on the ciphertexts.
This can be done by updating the modulus after encryption.

In summary, intuitively our encryption scheme works as follows:
We maintain a list of ciphertexts for each plaintext (including duplicates) sorted by the plaintexts on the server.
However, the list is rotated around a random offset (chosen uniformly from the range between $1$ and the number of ciphertexts).
We then encrypt and search using binary search.
However, due to the rotation which can divide a set of identical plaintexts adjacent in the list into a lower and upper part, the search and encryption algorithms become significantly more complex which is apparent in their detailed description below.

\subsection{Encryption Algorithm}
\label{sec:enc}

Let $\pe$ be a standard, probabilistic encryption scheme supporting the following three -- possibly probabilistic -- polynomial-time algorithms: $\kgen$, $\enc$ and $\dec$.
We use symmetric encryption, e.g.~AES in CBC or GCM mode, for speed, but assume an encryption oracle in the definition of semantic security.
Let $\mathbb{D}$ be the domain of plaintexts and $N = |\mathbb{D}|$ its size.
We now describe the algorithms and protocols of our efficiently searchable, encrypted data structure:

\begin{itemize}

\item $\key \gets \kgen(\secparam)$: Execute $\key \gets \pse.\kgen(\secparam)$.

\item $\Cds^{h+1} \gets \enc(\key, m, \Cds^h)$:
We denote $\Cds^h$ as $\Cds$ for brevity, if it is clear from the context.
First the client and server identify the index $j_m$ where $m$ is to be inserted (before).
Then the client sends the ciphertext of $m$ to the server which inserts it at position $j_m$.
Finally the server rotates the data structure by a random offset.

\begin{enumerate}

\item The client sets $l := 0$ and $u := n - 1$.

\item If $n = 0$ then go to step \ref{stp:sendc}.

\item The client requests $\Cds[0]$ and sets $r := \dec(\key, \Cds[0])$.

\item 
Set $j := \lfloor l + \frac{u - l}{2} \rfloor$.
The client requests $\Cds[j]$ and executes $m' := \dec(\key, \Cds[j])$.
If $m' - r \bmod N > m - r \bmod N$, then the client sets $l := j + 1$.
If $m' - r \bmod N < m - r \bmod N$, then the client sets $u := j$.
If $m' = m \bmod N$, then the client flips a random coin and sets either $l := j + 1$ or $u := j$ depending on the outcome of the coin flip.
The client repeats this step until $l = u$.

\item 
\label{stp:sendc}
The client sends $c \gets \pse.\enc(\key, m)$ to the server.

\item The server sets
\begin{eqnarray*}
 \Cds[n] & := & \Cds[n-1] \\
 \Cds[n-1] & := & \Cds[n-2] \\
 \cdots\\
 \Cds[l+1] & := & \Cds[l] \\ 
 \Cds[l] & := & c
\end{eqnarray*}
The server sets $n := n+1$.

\item The server chooses a random number $s \sample \mathbb{Z}_{n-1}$.
The server sets the new encrypted data structure to $\Cds^{h+1}[j] := \Cds[j+s \bmod n]$ for $0 \leq j < n$ as a result of the encryption operation.
This data structure $\Cds^{h+1}$ will be used as input to the next encryption operation.

\end{enumerate}

\item $\Angle{j, j'} := \search(\key, \Cds, a, b)$:
Wlog.~we assume that $a \leq b$ in the further exposition.
In case $a > b$ the query is rewritten as to match all $x$, such that $0 \leq x < b \lor a \leq x < N$.

Let $j_{\mathrm{min}}(v)$ be the minimal index of plaintext $v$ and $j_{\mathrm{max}}(v)$ be the maximal index of plaintext $v$.
$$j_{\mathrm{min}}(v) := \min(j | \dec(\key, \Cds[j]) = v)$$
$$j_{\mathrm{max}}(v) := \max(j | \dec(\key, \Cds[j]) = v)$$

If $j_{\mathrm{min}}(v) = 0$ and $j_{\mathrm{max}}(v) = n - 1$ and there are two distinct plaintexts in the data structure, then we redefine as 
$$j_{\mathrm{min}}(v) := j + 1 | \dec(\key, \Cds[j]) < v \land \dec(\key, \Cds[j+1]) = v)$$
$$j_{\mathrm{max}}(v) := j - 1 | \dec(\key, \Cds[j]) > v \land \dec(\key, \Cds[j-1]) = v)$$

If $a$ and $b$ do not span the modulus, i.e.~$j_{\mathrm{min}}(a) < j_{\mathrm{max}}(b)$, then a query for $x \in [a, b]$ is rewritten to $j_{\mathrm{min}}(a) \leq x \leq j_{\mathrm{max}}(b)$.
Else, it is rewritten to $0 \leq x < j_{\mathrm{max}}(b) \lor j_{\mathrm{min}}(a) \leq x < n'$.

Both $j_{\mathrm{min}}(a)$ and $j_{\mathrm{max}}(b)$ are found using a separately run, interactive binary search.
We next present this protocol.

\begin{enumerate}

\item The client sets $l := 0$ and $u := n - 1$.

\item The client requests $\Cds[0]$, $\Cds[n-1]$ and sets $r := \dec(\key, \Cds[0])$.
If $\dec(\key, \Cds[0]) = \dec(\key, \Cds[n-1])$ and searching for $j_{\mathrm{min}}(a)$, it sets $r := r + 1$.

\item Set $j := \lfloor l + \frac{u - l}{2} \rfloor$.
The client requests $\Cds[j]$ and executes $m := \dec(\key, \Cds[j])$.
If $m - r \bmod N < a - r \bmod N$ (or $m - r \bmod N \leq b - r \bmod N$, respectively) then the client sets $l := j + 1$.
Else the client sets $u := j$.
The client repeats this step until $l = u$.

\item The client returns $j_{\mathrm{min}}(a) := l$ (or $j_{\mathrm{max}}(b) := u$, respectively).

\end{enumerate}

\item $m := \dec(\key, \Cds[j])$:
Set $m := \pse.\dec(\key, \Cds[j])$.

\end{itemize}

\subsection{Security}

\begin{theorem}
\label{thm:slf-ope}
Our efficiently searchable, encrypted data structure $\eseds_{\pse}$ is $\indcpads$-secure.
\end{theorem}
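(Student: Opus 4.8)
The plan is to prove $\indcpads$-security by a reduction to the $\indcpa$-security of the underlying probabilistic encryption scheme $\pse$, exploiting the \emph{structural independence} that the random rotation buys us. The key observation is this: the only data the adversary ever sees is the array $\Cds$ returned at the end of the experiment (the search protocol is not invoked in the $\indcpads$ game), and by construction $\Cds$ is obtained by sorting the ciphertexts of $\Mms_b$ according to a \emph{randomized} order (ties broken by coin flips), then applying a final uniformly random cyclic rotation $s$. I would first argue that the \emph{layout} — i.e.\ which array slot holds a ciphertext of which plaintext — is, as a probability distribution over slot-to-plaintext assignments, \emph{identical} whether we started from $\Mms_0$ or from $\Mms_1$, provided only $|\Mms_0| = |\Mms_1| = n$. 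Concretely, fix the multiset of plaintexts; the randomized insertion order induces some distribution on the cyclic sequence of plaintexts, and then rotating by a uniform $s \in \mathbb{Z}_n$ makes the position of any fixed ``block boundary'' uniform on $\{0,\dots,n-1\}$. I would want to show the resulting distribution on labeled arrays depends only on the \emph{multiset} $\Mms_b$ \emph{up to the identity of the plaintext values}, in the sense relevant to the adversary's guess.

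Next I would set up the reduction. Given an $\indcpads$-adversary $\adv$ with non-negligible advantage, I build an $\indcpa$-adversary (or rather a hybrid argument over the $n$ slots): using a standard hybrid, I replace the real ciphertexts one slot at a time with encryptions of a fixed dummy message, each swap being indistinguishable by $\indcpa$-security of $\pse$ (via the encryption oracle in the symmetric setting, as the paper notes). After all $n$ hybrids, $\Cds$ carries no information about the plaintext \emph{values} at all — only the layout pattern, which by the first paragraph is determined by public data ($n$) plus internal coins independent of $b$. At that point I compute $\adv$'s success probability directly: conditioned on the layout, the plaintext sitting in slot $j'$ is, from $\adv$'s view, a uniformly random draw (respecting multiplicities) from $\Mms_b$; since $\adv$ cannot even tell which of $\Mms_0,\Mms_1$ was used, the best it can do is match the frequency in $\Mms_0 \cup \Mms_1$, i.e.\ $\Pr[\eseds.\dec(\sk,\Cds[j'])=m'] \le \#_{\Mms_0\cup\Mms_1}m' / |\Mms_0\cup\Mms_1| + \negl$, which is exactly the bound in Definition~\ref{def:ind-cpa-ds}. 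Chaining the hybrid losses gives the overall negligible advantage.

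I expect the main obstacle to be the first paragraph — making the ``layout is independent of $b$'' claim precise and actually true, rather than morally true. The subtlety is that the randomized-order distribution over arrangements of a multiset genuinely differs between $\Mms_0$ and $\Mms_1$ when the multisets have different multiplicity profiles (e.g.\ $\{0,0\}$ vs.\ $\{0,1\}$): with $\{0,0\}$ every slot holds plaintext $0$, whereas with $\{0,1\}$ the two slots hold distinct values. So the correct statement is \emph{not} that the labeled-array distribution is identical, but that after the $\indcpa$ hybrid has erased the values, what remains (the partition of slots into ``equal-plaintext blocks'' and the cyclic order of blocks) still leaks nothing that lets $\adv$ beat the $\Mms_0\cup\Mms_1$-frequency bound. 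I would handle this by carefully defining the adversary's effective posterior on $(\text{slot }j',\text{value})$ pairs and showing the random rotation $s$ plus the random tie-breaking coins render every consistent value assignment for slot $j'$ equiprobable in the appropriate conditional sense; the frequency term $p$ in the experiment's output is precisely calibrated so that this equiprobability yields advantage $0$ in the idealized game. A secondary, more routine obstacle is bookkeeping the hybrid: the data structure is built by $n$ sequential, \emph{interactive} $\enc$ calls whose internal binary search touches many cells, so I must make sure the hybrid reduction can simulate those intermediate decryptions — but since the reduction holds all the (dummy or real) plaintexts it is inserting, it can run the server side honestly and answer the client's $\dec$ queries itself, so this goes through without needing the secret key.
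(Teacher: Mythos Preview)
Your approach is essentially the paper's: swap all $n$ ciphertexts for dummy encryptions via $\indcpa$-security of $\pse$ (the paper does this in a single simulator step, outputting $n$ copies of $\pse.\enc(\key,0)$; you do it by an $n$-step hybrid --- same content), then use the uniform final rotation to conclude that the marginal plaintext at any slot $j'$ is a uniform draw from $\Mms_b$, which after averaging over the coin $b$ gives exactly the frequency bound $\#_{\Mms_0\cup\Mms_1}m'/|\Mms_0\cup\Mms_1|$. Your ``main obstacle'' dissolves once the hybrid is complete: the adversary then sees only $n$ encryptions of $0$, so the block structure of the layout is a purely latent variable it cannot observe, and nothing beyond the one-slot marginal needs to be analysed --- no posterior over partitions is required. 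Your bookkeeping remark (that the reduction can simulate the interactive binary-search insertions because it holds all the plaintexts being inserted) is correct and is a detail the paper's proof elides.
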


\begin{proof}

Since all cells of the data structure consists only of ciphertexts from a $\indcpa$-secure encryption scheme, we can replace the encrypted data structure by a simulator.
Let the simulator $\simulator_{\eseds}^{\mathrm{E}_{\key}}(\secparam, n)$ output $n$ ciphertexts $c \gets \mathrm{E}_{\key}(0)$.
The adversary $\adv$ cannot distinguish the following experiment $\mathrm{Exp}_{\adv,\simulator_{\eseds},\eseds_{\pse}}^{\indcpads}$ from experiment $\mathrm{Exp}_{\adv,\eseds_{\pse}}^{\indcpads}$ except with negligible probability.

\begin{pcvstack}[center]
\procedure
 {$\mathrm{E}_{\key}(m)$}{
  c \gets \pse.\enc(\key, m) \\
  \pcreturn c
}

\pcvspace

\procedure
 {$\mathrm{Exp}_{\adv,\simulator_{\eseds},\eseds_{\pse}}^{\indcpads}(\secpar)$}{
 \Angle{\key, \Cds} \gets \eseds.\kgen(\secparam) \\
 \Angle{\Mms_0, \Mms_1, \state} \gets \adv^{\mathrm{E}_{\key}}(\secparam) \\
 \pcif |\Mms_0| \neq |\Mms_1| \pcthen \ \pcreturn \bot \\
 \Cds \gets \simulator_{\eseds}^{\mathrm{E}_{\key}}(\secparam, |\Mms_0|) \\
 \Angle{j', m'} \gets \adv^{\mathrm{E}_{\key}}(\secparam, \Cds, \state) \\
 \pcreturn \Angle{\eseds.\dec(\Cds[j']) = m', \dfrac{\#_{\Mms_0 \cup \Mms_1}m'}{|\Mms_0 \cup \Mms_1|}}
}
\end{pcvstack}

The adversary $\adv$ in $\mathrm{Exp}_{\adv,\simulator_{\eseds},\eseds_{\pse}}^{\indcpads}$ clearly has no information which plaintext multi-set has been encrypted or about the plaintexts' positions in the data structure.
Since in our $\eseds_{\pse}$ each plaintext has equal probability of being at any index within the data structure, the adversary can at best guess the index $j'$ for any $m' \in \Mms$.
However, the probability of a successful guess is bounded by $\dfrac{\#_{\Mms_0 \cup \Mms_1}m'}{|\Mms_0 \cup \Mms_1|}$.
\end{proof}

\subsection{Implementation}

In order to allow efficient online encryptions, we employ a technique we call {\em decoupled encryptions} in our implementation that however temporarily violates $\indcpads$-security.
A decoupled encryption has the positive effect that an encryption operation returns control almost instantly to the client.
First, we store the index $j$ explicitly in a database table along with the ciphertexts.
Second, we choose a large domain $D$ for the index, e.g.~256 bit.
When we insert a new plaintext $m$ into the data structure, we search for the element $\Cds[j']$ before and the element $\Cds[j'']$ after the new element as described before.
Then we insert $m$ as $\Cds[\lfloor \dfrac{j''-j'}{2} \rfloor + j'] = \pse.\enc(k, m)$.
This operation is constant time, however after multiple encryptions the adversary may distinguish the data structures for two distinct sets of plaintexts.

To restore security, we operate a background process in the database management system.
This background process scans the entire data structure and makes the indices of all neighbouring data cells equidistant.
For example, let there be $n$ ciphertexts in the data structure and let $|D|$ be the size of the domain of the index.
Then the background process assigns the indices $\lfloor \dfrac{|D|}{n+1} \rfloor$, $2 \cdot \lfloor \dfrac{|D|}{n+1} \rfloor$, $3 \cdot \lfloor \dfrac{|D|}{n+1} \rfloor$, $\ldots$ to the data cells.
The background process also rotates the data structure around a new random number $r$.
After the background process completes the data structure is $\indcpads$-secure.

The background process can run incrementally and independently of queries.
This allows it to be scheduled adaptively to the load of the database system.
Hence, decoupled encryptions allow efficient search and online encryption operations while reaching $\indcpads$-security eventually.

\section{Performance Evaluation}
\label{sec:evaluation}

We prototypically implemented and in a number of experiments evaluated the performance our $\indcpads$-secure $\eseds$.
In this section we report the results of our experiments measuring the run-time of range searching over encrypted data.

\subsection{Implementation}

We used Java for our implementation and evaluation, since many multi-tier applications are implemented in Java.
Although a native cryptographic library, such as Intel's AES-NI, promises further performance improvements, programming languages such as C or C++ are more commonly used for systems software (such as database management systems) rather than for database applications (which only issue database queries).
However, in our setup encryption and decryption is performed in the database application.
We used Oracle's Java 1.8 and all experiments were run on the Java SE 64-Bit Server virtual machine.
The database backend was the MySQL replacement MariaDB in version 10.1.
When using a database, such as MariaDB, that was not specifically developed for operation on encrypted data, one needs to configure it to prevent the attacks on configuration described by Grubbs et al.~\cite{GruRis17}.
All experiments were run on a single machine with a 4-core Intel i7 CPU at 2.9 GHz and 16 GB of RAM on Windows 10 Enterprise.

\subsection{Experimental Setup}

We measure the run-time of a typical, simply structured (i.e.~a single search term and no conjunctions or disjunctions) database query on a single ordered database column, e.g.~a range query or a top-k query.
We use synthetic data and queries.
However, we adapt our choice of parameters to the data from the DBLP data set.
In the spirit of Grubbs et al.~\cite{GruSek16} we considered author names.
At the time of our experiments there were about $1.500.000$ million distinct author names in DBLP, the most frequent of which appears roughly $80$ times.

We implement the client interface as it would be used in an application using a database.
The application supplies the parameters, e.g.~the start $a$ and end $b$ of a range or the $k$ in top-k, and receives the results in plaintext.
Thus, our measured run-time includes the $\search$ algorithm, the standard query by the database management system and the decryption of the result.
We emphasize that in more complex queries, e.g.~including multiple search terms combined by conjunction and disjunctions, the relative time for executing the query on the database management system would be proportionally higher.
Hence, our experiments put an upper bound on the worst case of the proportional overhead.

Our target quantity in our measurements is the absolute run-time in milliseconds.
For range queries we measure the dependence of the run-time on different parameters.
\begin{itemize}

\item {\em Size of the database}:  We vary the database size from $100.000$ to $1.000.000$ plaintexts in steps of $100.000$, i.e.~data items before encryption.

\item {\em Size of the queried range}: We vary the range size and consequently the result set size in the query from $10$ to $100$ in steps of $10$.

\end{itemize}

For top-$k$ queries we measure the dependence of the run-time of the following parameter.
\begin{itemize}

\item {\em $k$}:  We vary the limit $k$ from $10$ to $100$ in steps of $10$.

\end{itemize}

We compare the run-time on encrypted data to the run-time on plaintext data.
Note that queries on plaintext only need to execute the query on the database management system, i.e.~the time for the $\search$ algorithm and decryption of results is $0$.

We use synthetically generated data and queries.
We uniformly choose distinct plaintexts and we uniformly choose a begin of the range query and then compute the end using the fixed size parameter of the experiment.

We repeat each experiment $30$ times discarding the first $10$ experiments in order to allow to adjust the Java JIT compiler.
We report the mean and $95\%$ confidence interval for each parameter setting.

\subsection{Results}

\begin{figure}[!h]
\centering
\includegraphics[width=\myimagesize]{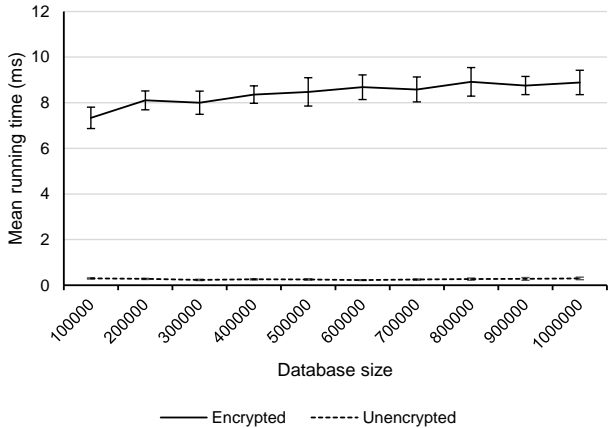}
\caption{Performance over different database sizes}
\label{fig:size}
\end{figure}

{\em Database size}: Figure~\ref{fig:size} shows the running time over the database size.
We use a query range size of $10$.
The database size increases from $100.000$ to $1.000.000$ plaintexts in steps of $100.000$.
The running time is measured in milliseconds.
The error bars show the $95\%$ confidence interval.
Since our search algorithms run in sub-linear time only a very slight increase ($20\%$) in running time is measurable compared to the increase in database size ($900\%$).
The overhead of our encryption is roughly $9$ milliseconds.

\begin{figure}[!h]
\centering
\includegraphics[width=\myimagesize]{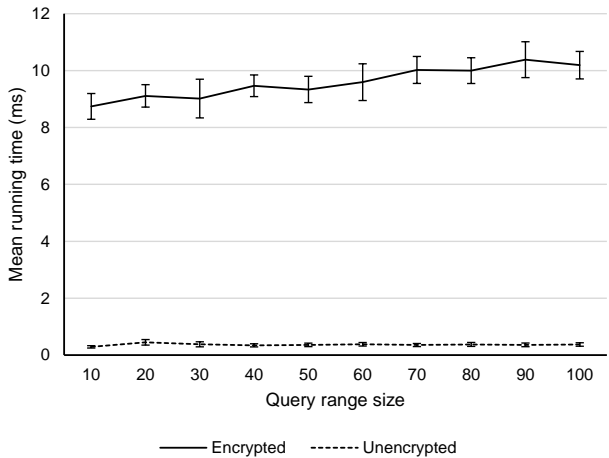}
\caption{Performance over query range size}
\label{fig:range}
\end{figure}

{\em Query Range Size}: Figure~\ref{fig:range} shows the running time over the query range size.
We use a database size of $1.000.000$ plaintexts.
The query range size and hence the expected result set size increases from $10$ to $100$ in steps of $10$.
The running time is measured in milliseconds.
The error bars show the $95\%$ confidence interval.
The running time increase is slight and approximately linear in the query range size and there is a constant baseline.
We attribute the constant cost to our binary search algorithm which as shown in Figure \ref{fig:size} behaves almost constant for these database sizes.
We attribute this increase to the cost of decryption which is dominated by the cryptographic operations.

\begin{figure}[!h]
\centering
\includegraphics[width=\myimagesize]{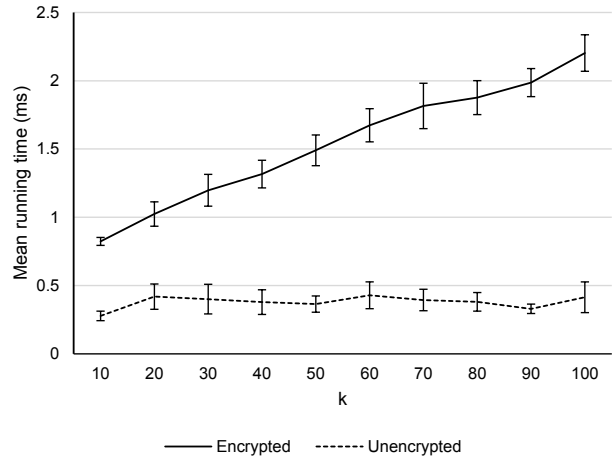}
\caption{Performance over k for top-k queries}
\label{fig:topk}
\end{figure}

{\em Top-k queries}: Figure~\ref{fig:topk} shows the running time for top-$k$ queries over $k$.
We use a database size of $1.000.000$ plaintexts.
The value of $k$ increases from $10$ to $100$ in steps of $10$.
The running time is measured in milliseconds.
The error bars show the $95\%$ confidence interval.
The constant baseline is lower, since top-$k$ queries can be executed without a search algorithm, when the minimum ciphertext (the rotation value) is stored as part of the key.
The linear increase due to decryption of results is now clearly visible.

\subsection{Discussion}

We observe an almost constant overhead for the search algorithm of about $9$ milliseconds.
Then, decryption and filtering is linear in the result size.
However, for reasonable result set sizes -- up to $100$ ciphertexts in our experiments -- it stays below $2$ milliseconds.
Note that decryption is unavoidable in searching over encrypted data and often excluded in other scientific work.
The database query time is not measurably affected by our scheme.

Hence, we conclude that our $\indcpads$-secure data structure adds an overhead of roughly $10$ milliseconds (per encrypted column in the query) for reasonable result sizes.
This is a very good performance compared to other schemes for range queries over encrypted data \cite{BoePod16,DemPap16,HahKer16,LewWu16,RocApo16}.

\section{Related Work}
\label{sec:related}

Our work is related to other order-preserving encryption schemes, searchable encryption schemes -- particularly for range queries --, order-revealing encryption, leakage-abuse attacks and other encryption schemes that in principle can be used to perform range queries.

\subsection{Order-Preserving Encryption}

Order-preserving encryption was introduced by Agrawal et al.~in \cite{AgrKie04}.
The idea is based on running queries using unmodified database management systems using deterministic encryption by Hacig\"um\"us et al.~\cite{HacIye02a}.
However, Agrawal et al.~extended it to range queries.
Their original proposal uses an informal security model.
Later, Boldyreva et al.~\cite{BolChe09} formalized the security and presented a new construction.
They define indistinguishability under ordered chosen plaintext attack ($\indocpa$).
Note that $\indocpa$-security leaks the order of plaintexts by the ciphertexts and is hence strictly less secure than $\indcpads$-security.
They also show that with constant local storage (the key only) $\indocpa$-security requires exponentially sized ciphertexts and therefore settle for a weaker notion.
Again, Boldyreva et al.~further improve this definition in \cite{BolChe11}.
In this paper, they also introduce modular order-preserving encryption.
Mavroforakis et al.~show how to improve the security of modular order-preserving encryption against query observation attacks in \cite{MavChe15}.
They introduce fake queries to hide the modulus, but this only works for uniformly distributed plaintexts as shown by Durak et al.~\cite{DurDuB16}.
An improved security model and construction requiring only constant local storage was introduced by Teranishi et al.~in \cite{TerYun14}.
Their idea is to occasionally introduce larger gaps into the ciphertexts.
However, this also does not yet achieve $\indocpa$ security and has not been tested against the attacks by Naveed et al.
Hwang et al.~present a performance improvement for this encryption scheme in \cite{HwaKim15}.
They show how to use a more efficient random sampling.

The first $\indocpa$ secure order-preserving encryption was presented by Popa et al.~in \cite{PopLi13}.
It also forms the first $\eseds$ for order-preserving encryption, since it imposes a data structure beyond a single ciphertext.
They introduce the concept of storing the state (symmetrically encrypted) on the server and make ciphertexts (necessarily) mutable, i.e.~adapting to insertions.
Schr\"opfer and Kerschbaum improve the performance of this model in~\cite{KerSch14}.
Kerschbaum introduces an even stronger security model -- indistinguishability under frequency-analyzing ordered chosen plaintext attack -- in \cite{Ker15}.
We build upon his idea and incorporate the concept of frequency-hiding into our data structure.
Roche et al.~combine FH-OPE by Kerschbaum with on-demand sorting, i.e.~when searches are performed \cite{RocApo16}.
While their encryption is strongly secure before any queries, it deteriorates after queries even on the stored data structure and hence is less secure than $\indcpads$.


There are many more order-preserving encryption schemes that have been proposed in the literature \cite{AgrElA09,HilKos11,KadAma10a,KadAma10b,KreYak15,LeePar09,LiZha15,LiuWan12,LiuWan13,OezSin03,WozRos13,XiaYen12a,XiaYen12b} which we do not discuss here, since they lack a formal security analysis.

\subsection{Searchable Encryption}

Searchable encryption allows the comparison of a token (corresponding to a plaintext) to a ciphertext.
The ciphertext (without any token) is $\indcpa$-secure.
The token can match plaintexts for equality or the plaintext to a range.
Only the secret key holder can create tokens.

The concept of searchable encryption has been introduced by Song et al.~in \cite{SonWag00}.
It supports equality searches and additions, but requires linear time for searching, since each ciphertext needs to be compared.
In order to speed up search an encrypted inverted index can be built.
This inverted index is an $\eseds$, since it imposes a data structure.
The first encrypted inverted index for equality search was presented by Curtmola et al.~in \cite{CurGar11}.
It is an efficiently searchable, encrypted data structure.
It supports (expected) constant time search, but all plaintexts (the inverted index) need to be encrypted at once and additions are not supported.
Dynamic searchable encryption \cite{KamPap12} made the data structure mutable in order to support additions.
Since then a number of dynamic searchable encryption schemes with indexes have been proposed \cite{CasJae14,CasJar13,HahKer14,KamPap13,NavPra14,StePap14}.
A recent survey provides a good overview \cite{BosHar14}.

Tackling range queries with searchable encryption is more complex.
The first proposal by Boneh and Waters in \cite{BonWat07} had ciphertext size linear in the size of the domain of the plaintext.
The first poly-logarithmic sized ciphertexts scheme was proposed by Shi et al.~in \cite{ShiBet07}.
However, their security model is somewhat weaker than standard searchable encryption.
The construction is based on inner-product predicate encryption which has been made fully secure by Katz et al.~in \cite{KatSah08}.
All schemes follow the construction by Song et al.~(without inverted indices) and require linear search time.
The first attempt to build range-searchable encryption into an index (an $\eseds$) has been made by Lu in \cite{Lu12}.
However, the inverted index tree reveals the pointers and is hence no more secure than order-preserving encryption.
Demertzis et al.~\cite{DemPap16} map a range query to keyword queries by providing tradeoffs between storing replicated values in each of its ranges and enumerating all values within range query.
The search can then be easily performed using the data structure of Curtmola et al.~\cite{CurGar11}.
While the scheme is range searchable, its queries are very revealing and it has high storage cost (at least $O(n \log n)$).
Boelter et al.~\cite{BoePod16} use garbled circuits to implement the search within a node of the index.
They do not encrypt the pointers in the index and are hence susceptible to the attacks by Naveed et al.~and are not $\indcpads$ secure.
The scheme by Hahn and Kerschbaum~\cite{HahKer16} creates an index using the access pattern of the range queries.
No other information is leaked, however, this provides amortized poly-logarithmic search time.
The scheme is only $\indcpads$-secure as long as no queries have been performed (and the index has been partially built).
Their scheme is based on inner-product predicate encryption which is too slow for practical use.

\subsection{Order-Revealing Encryption}

Order-revealing encryption~\cite{BonLew15,CheLew16,LewWu16} is an alternative to order-preserving encryption.
Instead of preserving the order there is a public function that reveals the order of two plaintexts using the ciphertexts only.
At first, it may seem paradoxical to combine the disadvantages of order-preserving and searchable encryption: order revelation and modified comparison function.
However, order-revealing encryption has also advantages.
It allows an $\indocpa$ secure encryption with constant-size ciphertexts, constant size client storage and without mutation circumventing impossibility results in \cite{BolChe09} and \cite{PopLi13}.
However, the first construction was not only impractical due to its disadvantages, but also due to its performance.
A different construction with slightly more leakage, but significantly better performance was presented by Chenette et al.~in \cite{CheLew16}.
This construction was further improved by Lewi and Wu in \cite{LewWu16}.
They allow comparison only between a token and an $\indcpa$-secure ciphertext as in searchable encryption, i.e.~the scheme has no leakage when no token is revealed.
Their search procedure requires a linear search over all ciphertext and no indexing is possible.
Hence, compared to our scheme which has logarithmic search time, order-revealing encryption currently remains impractical.

\subsection{Leakage-Abuse Attacks}

We discussed many leakage-abuse attacks on search over encrypted data.
There are static attacks on order-preserving encryption \cite{DurDuB16,GruSek16,NavKam15,PouWri16} and attacks using dynamic information that also work on searchable encryption \cite{CasGru15,GruMcP16,IslKuz12,KelKol16,LacMin17,ZhaKat16}.

Kellaris et al.~\cite{KelKol16} have presented generic inference attacks on encrypted data using range queries.
Their attacks work in a setup where the adversary has compromised the database server and can observe all queries, i.e.~they work for dynamic leakage during the execution of queries and are not ciphertext-only attacks.
They do not assume a specific cryptographic protection mechanism, but work only on its dynamic leakage profile, such as the access pattern or the result size, i.e.~they also apply to ORAM-protected databases.
The prerequisite assumption for Kellaris et al.'s attack to work is that the distribution of queries and the distribution of plaintexts differ.
Specifically, they assume that each possible query will be executed, but not each possible plaintext is in the database.
We note that Kellaris et al.~performed all their attacks on synthetic data and queries whereas static ciphertext-only attacks on real data have been publicized \cite{Duc13}.

There are also some more specific inference attacks.
Islam et al.~\cite{IslKuz12} and Cash et al.~\cite{CasGru15} have performed inference attacks by observing the queries on encrypted data.
Islam et al.~assume that the distribution of query keywords is approximately known and then can recover the query keywords using frequency analysis.
Cash et al.~improve the accuracy of this attack even under slightly weaker assumptions about the knowledge of query distribution, but then also use the information to recover plaintexts from the access pattern.
Lacharite et al.~\cite{LacMin17} improve the accuracy of plaintext guessing by incorporating information from observed queries.

Next to the ones already discussed plaintext guessing attacks Pouliot and Wright show that adding deterministic encryption to Bloom filters -- not surprisingly -- does not prevent cryptanalysis \cite{PouWri16}.
Zhang et al.~assume that the adversary can actively insert plaintexts and can then recover query plaintexts from the access pattern \cite{ZhaKat16}.
Grubbs et al.~\cite{GruMcP16} also attacked an implementation of multi-user searchable encryption which allows inferences between users leading to a complete breakdown of the security guarantee of encrypted web applications.

\subsection{Other Encryption Schemes}

Several attempts were made to build indexes for range queries using deterministic encryption or distance-revealing encryption \cite{HorMeh04,WanRav13}.
However, since they do not follow a formal security model and are based on primitives that are easily attackable we do not consider them here.

Oblivious RAM \cite{GolOst96} allows to hide the accesses to disk or memory and hence the access pattern of searchable or order-preserving encryption.
However, as Naveed showed in \cite{Nav15} the combination is not straightforward.
Recently, a new ORAM technique -- TWORAM -- has been presented by Garg et al.~in \cite{GarMoh16} that overcomes these limitations.
Kellaris et al.~showed in \cite{KelKol16} that inference attacks even against ORAM-protected range queries exist.

In theory search can be implemented without leakage using homomorphic encryption \cite{Gen09}.
However, since in our model the server returns an arbitrarily sized subset of the data and in homomorphic encryption the worst case determines the cost, the server would always return the entire encrypted data.
In terms of performance this can, of course, always be beaten by symmetric encryption and search on the client.

\section{Conclusions}
\label{sec:conclusions}

We present the $\indcpads$-security model -- an extension of semantic security -- that provably prevents plaintext guessing attacks as those by Naveed et al.~\cite{NavKam15} and Grubbs et al.~\cite{GruSek16}\intersentencespace
We show how this model implies that each ciphertext of an efficiently searchable, encrypted data structure must be semantically secure.
However, we also show that even if all ciphertexts in a data structure are semantically secure, this does not imply $\indcpads$-security.

Then we present an efficiently searchable (logarithmic time, linear space), encrypted data structure secure in this model.
We show that this scheme is practical in our evaluation, since it only has a $10$ milliseconds overhead on a range query over a million database entries.
This shows that one can built efficient, encrypted databases that withstand break-ins and data theft as we have seen in many recent attacks on cloud infrastructures.

\subsection{Future Work: Full dynamicity}

For ease of exposition we excluded deletion from the operations of our efficiently searchable, encrypted data structures $\eseds$.
However, given our instantiation for range queries over encrypted data, it should be easy to see that deletion does not pose any major obstacle compared to insertion.
Of course, for a fully functional database implementation we also implement deletion.

{
\bibliographystyle{IEEEtranS}
\bibliography{paper}
}

\end{document}